\DeclareFontFamily{OT1}{rsfs}{} \DeclareFontShape{OT1}{rsfs}{m}{n}{
<-7> rsfs5 <7-10> rsfs7 <10-> rsfs10}{}
\DeclareMathAlphabet{\mycal}{OT1}{rsfs}{m}{n}
\newtheorem{Theorem}{Theorem}
\newcommand{\rd}{{\rm d}} % exterior differential
\newcommand{\dtwo}{\Delta}
\newcommand{\iv}{\raisebox{-2.15pt}{$\lrcorner$}\hspace*{-3.3pt}\rfloor}
\newcommand{\lie}{{\pounds}} %Lie derivative
\begin{document}
\title{\bf On the existence of Kundt's metrics with
compact sections of null hypersurfaces}
\author{Jacek Jezierski\thanks{E-mail: Jacek.Jezierski@fuw.edu.pl}\\
Department of Mathematical Methods in Physics, \\ University of Warsaw,
ul. Ho\.za 74, 00-682 Warsaw, Poland}
%\date{PACS numbers: 11.10.Ef, 04.20.Ha, 11.30.Jj}
%\date{}
\maketitle

\begin{abstract}
It is shown that Kundt's  metric for vacuum
cannot be constructed when two-dimensional space-like
sections of null hypersurfaces are compact, connected
manifolds with no boundary unless they are tori or spheres, i.e.
higher genus
$\mathbf{g} \geq 2$ is excluded by vacuum Einstein equations.
The so-called {\em basic equation} (resulting from Einstein equations)
is examined.
This is a non-linear PDE for unknown covector field and unknown
Riemannian structure on the two-dimensional manifold.
It implies several important results derived in this paper.
It arises not only
for Kundt's class but also for degenerate Killing horizons and
vacuum degenerate isolated horizons.

\end{abstract}

\section{Introduction}

Let us consider a null hypersurface in a Lorentzian spacetime $M$
which is a
three-dimensional submanifold $S \subset M$ such that the
restriction $g_{ab}$ of the spacetime metric $g_{\mu\nu}$ to $S$
is degenerate.

We shall often use adapted coordinates, where coordinate $x^3=u$ is
constant on $S$.  Coordinates on $S$ will be labeled by $a,b=0,1,2$
and sometimes coordinate $x^0$ will be denoted by $v$,
finally, coordinates on $B_{v} := \{x\in S \, | \; x^0=v=\mbox{const}\}$
 will be labeled by $A,B=1,2$.
Spacetime coordinates will be labeled by Greek characters
$\alpha, \beta, \mu, \nu$.

The non-degeneracy of the spacetime metric implies that the
metric $g_{ab}$ induced on $S$ from the spacetime metric
$g_{\mu\nu}$ has signature $(0,+,+)$. This means that there is a
non-vanishing null-like vector field $X^a$ on $S$, such that its
four-dimensional embedding $X^\mu$ to $M$ (in adapted coordinates
$X^3=0$) is orthogonal to $S$. Hence, the covector $X_\nu = X^\mu
g_{\mu\nu} = X^a g_{a\nu}$ vanishes on vectors tangent to $S$ and,
therefore, the following identity holds:
\begin{equation}\label{degeneracy}
  X^a g_{ab} \equiv 0 \ .
\end{equation}
It is easy to prove (cf. \cite{JKC}) that integral curves of
$X^a$, after a suitable reparameterization, are geodesic curves of
the spacetime metric $g_{\mu\nu}$. Moreover, any null
hypersurface $S$ may always be embedded in a one-parameter
congruence of null hypersurfaces.

We assume that topologically we have $S =  I \times
S^2$ where $I\subset{\mathbb R}^1$ is a real interval.
 Since our considerations are purely local, we fix the
orientation of the ${\mathbb R}^1$ component and assume that
null-like vectors $X$ describing degeneracy of the metric $g_{ab}$ of
$S$ will be always compatible with this orientation. Moreover, we
shall always use coordinates such that the coordinate $x^0$
increases in the direction of $X$, i.e.,~inequality $X(x^0) = X^0
> 0$ holds. In these coordinates degeneracy fields are of the form
$X = f(\partial_0-n^A\partial_A)$, where $f > 0$, $n_A = g_{0A}$
and we rise indices with the help of the two-dimensional matrix
${\tilde{\tilde g}}^{AB}$, inverse to $g_{AB}$.

If by $\lambda$ we denote the two-dimensional volume form on each
surface $x^0 = \mbox{\rm const}$:
\begin{equation}\label{lambda}
  \lambda:=\sqrt{\det g_{AB}} \ ,
\end{equation}
then for any degeneracy field $X$ of $g_{ab}$ the following
object
\[
v_{X} := \frac {\lambda}{X(x^0)}
\]
is a well defined scalar density on $S$
according to \cite{jjPRD02}.  This means
that ${\bf v}_X := v_X dx^0 \wedge dx^1 \wedge dx^2$ is a
coordinate-independent differential three-form on $S$. However,
$v_X$ depends upon the choice of the field  $X$.

It follows immediately from the above definition that the
following object:
\[
\Lambda = v_X \ X
\]
is a well defined (i.e.,~coordinate-independent) vector density on
$S$. Obviously, it {\em does not depend} upon any choice of the
field $X$:
\begin{equation}\label{Lambda}
\Lambda =  \lambda (\partial_0-n^A\partial_A) \ .
\end{equation}
Hence, it is an intrinsic property of the internal geometry
$g_{ab}$ of $S$. The same is true for the divergence $\partial_a
\Lambda^a$, which is, therefore, an invariant, $X$-independent,
scalar density on $S$. Mathematically (in terms of differential
forms), the quantity $\Lambda$ represents the two-form:
\[
{\bf L} := \Lambda^a \left( \partial_a \; \iv  \;  dx^0 \wedge
dx^1 \wedge dx^2 \right) \ ,
\]
whereas the divergence represents its exterior derivative (a
three-from): d${\bf L} := \left( \partial_a \Lambda^a \right)dx^0
\wedge dx^1 \wedge dx^2$. In particular, a null surface with
vanishing d${\bf L}$ is called a {\em non-expanding horizon} (see
\cite{ABL}).

The examples of spacetimes  obeying
Einstein equations suggest that non-expanding horizons are rather
isolated objects. In this paper we consider the problem of
  existence of one-parameter
congruence of local\footnote{We assume that coordinates
$u$ and $v$ are only in small neighborhood and
$M$ is constructed locally around given sphere.} non-expanding horizons.
%and we show that it is {\em not} compatible with vacuum
%Einstein equations.
%However, such congruences are possible for Einstein-Maxwell solutions.
The family of null hypersurfaces which are simultaneously
non-expanding horizons leads to the algebraically special
spacetimes so called non-diverging solutions or Kundt's class
of metrics (see chapter 31 in \cite{exact}).

\section{Topological rigidity}
Following chapter 31 of \cite{exact} let us consider
the line element in the following form:
\begin{equation}\label{km}
 \rd s^2 = g_{AB}\rd x^A \rd x^B - 2 \rd u \left(\rd v
 + m_A \rd x^A + H \rd u \right), \quad A,B =1,2
\end{equation}
or equivalently (see (31.6) in \cite{exact}) in a complex notation:
\begin{equation}\label{kmc}
\rd s^2 = 2P^{-2}\rd \zeta \rd\bar\zeta - 2 \rd u \left(\rd v
 + W \rd\zeta+\overline{W}\rd\bar\zeta+ H \rd u \right)
\end{equation}

We assume that two-dimensional section
(parameterized by coordinates $x^1,x^2$ or $\zeta,\bar\zeta$)
 of null hypersurfaces $u=\,$const
is a compact, connected
manifold $B$ with no boundary. The extrinsic curvature
$l_{ab}=-\frac12 \lie_X g_{ab}$ of the
null hypersurface $u=\,$const vanishes because $\partial_v g_{AB}=0$
(cf. \cite{JKC}).

\begin{Theorem}\label{main}
For any Riemannian metric $g_{AB}$
on a two-dimensional, compact, connected manifold
with no boundary and genus $\mathbf{g}\geq 2$ the vacuum Einstein
equations imply
no solutions for the line element (\ref{km})
which describes Kundt's class of metrics.
\end{Theorem}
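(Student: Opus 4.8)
The plan is to distill from the vacuum Einstein equations a single scalar constraint living on the two-dimensional section $B$ --- the \emph{basic equation} announced in the abstract --- and then to obstruct it topologically by integrating over the closed surface $B$. The constraint ties together the Gauss curvature $K$ of the induced Riemannian metric $g_{AB}$, a covector field $\omega_A$ manufactured from the transverse datum $m_A$ of (\ref{km}), and an exact divergence.

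First I would compute the Ricci tensor of the line element (\ref{km}), exploiting the Kundt condition $\partial_v g_{AB}=0$ --- equivalently the vanishing of the extrinsic curvature $l_{ab}=-\frac12\lie_X g_{ab}$ recorded above. The tangential block $R_{AB}=0$ of the vacuum equations, pulled back to the surface $u=\mbox{const},\,v=\mbox{const}$, should collapse, after introducing the rotation one-form $\omega_A$ associated with $m_A$, to a two-tensor identity of the schematic shape
\begin{equation}\label{NH}
 R_{AB} = \tfrac12\,\omega_A\omega_B - \nabla_{(A}\omega_{B)}\ .
\end{equation}
Contracting with the inverse metric and using the two-dimensional identity $R=2K$ then yields the \emph{basic equation}
\begin{equation}\label{basic}
 2K = \tfrac12\,|\omega|^2 - \nabla^A\omega_A\ ,
\end{equation}
a genuinely nonlinear PDE for the pair $(g_{AB},\omega_A)$ on $B$, where $|\omega|^2:=g^{AB}\omega_A\omega_B$.

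The decisive move is to integrate (\ref{basic}) against the volume form $\lambda\,\rd x^1\,\rd x^2$ over the compact manifold $B$. Because $B$ has no boundary, the divergence term contributes nothing, $\int_B \nabla^A\omega_A\,\lambda\,\rd x^1\,\rd x^2=0$, while $|\omega|^2\geq 0$ pointwise since $g_{AB}$ is positive definite. Hence
\[
 2\int_B K\,\lambda\,\rd x^1\,\rd x^2 = \tfrac12\int_B |\omega|^2\,\lambda\,\rd x^1\,\rd x^2 \ \geq\ 0\ .
\]
Invoking the Gauss--Bonnet theorem, $\int_B K\,\lambda\,\rd x^1\,\rd x^2 = 2\pi\chi(B)=4\pi(1-\mathbf{g})$, the left-hand side equals $4\pi(1-\mathbf{g})$, which is strictly negative precisely when $\mathbf{g}\geq 2$. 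This contradicts the non-negativity of the right-hand side, so no metric of the form (\ref{km}) can satisfy the vacuum equations when $\mathbf{g}\geq 2$. (The borderline case $\mathbf{g}=1$ is forced into $\omega\equiv0$ and $K\equiv0$, i.e.\ the flat torus, while $\mathbf{g}=0$ survives unobstructed, matching the statement.)

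The main obstacle --- and the part demanding genuine care --- is the derivation of (\ref{NH}) with the \emph{correct sign and coefficient} on the quadratic term $\omega_A\omega_B$. Everything downstream is robust under the conventions chosen for $\omega_A$ (rescaling or re-signing $\omega$ merely flips the harmless divergence), but the definiteness of the integral inequality hinges entirely on the coefficient of $|\omega|^2$ being positive. I would therefore isolate this coefficient by careful bookkeeping of the Christoffel symbols and curvature components of (\ref{km}), cross-checking it against the known degenerate/extremal horizon constraint, before trusting the topological conclusion.
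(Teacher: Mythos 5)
Your proposal follows essentially the same route as the paper: the paper's proof quotes the Kundt-class field equations (31.21) of the reference to obtain the basic equation $\omega_{A||B}+\omega_{B||A}+2\omega_A\omega_B=R_{AB}$ (just as you defer to the known horizon constraint), takes its trace $\omega^A{}_{||A}=K-\omega^A\omega_A$, integrates over the closed surface so the divergence drops out, and contradicts Gauss--Bonnet via $0\leq\int_B\lambda\,\omega_A\omega^A=\int_B\lambda K<0$ for $\mathbf{g}\geq 2$. Your schematic coefficients differ from the paper's actual ones (and your Gauss--Bonnet evaluation of the left-hand side has a harmless factor-of-two slip), but, as you correctly identify, the only thing the argument needs is the positive sign of the quadratic term, which the paper's equation indeed has.
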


\begin{proof}
Einstein-Maxwell equations for Kundt's metrics
split into system of non-linear two-dimensional partial differential
equations (eqs (31.21) in \cite{exact})
  \begin{eqnarray} \nonumber
&&\left(P^2 W_{,v}\right)_{,\zeta} -\frac12 \left(W_{,v}\right)^2=0\, ,  \\
\label{EM1} && \Phi_{1,\zeta}= W_{,v} \Phi_1 \, ,\\ \nonumber
&& \dtwo \ln P +\frac12 P^2 \left( \overline{W}_{,v\zeta}+{W}_{,v\bar\zeta}
-2W_{,v}\overline{W}_{,v}\right) = 2 \kappa_0 \left|\Phi_1\right|^2
  \end{eqnarray}
The remaining Einstein-Maxwell equations
(see 31.22-25 in \cite{exact}) reduce to polynomial dependence on $v$ and
linear problems on $B$ if we assume that $u$-dependence is given.

In the case of vacuum\footnote{It is enough to assume that
$\Phi_1=0$.} the equations (\ref{EM1}) imply
the following {\em basic equation} on $B$:
\begin{equation}
\label{basic}
  \omega_{A||B} +\omega_{B||A} +2\omega_A \omega_B = R_{AB} \, ,
\end{equation}
where $\omega_A$ corresponds to $\partial_v W$, $||$ denotes
covariant derivative with respect to the metric $g_{AB}$ and
$R_{AB}$ is its Ricci tensor. The equation (\ref{basic})
is a starting point of our considerations and it is a special case of
(3.7) in \cite{ABL}, if we assume that ${\tilde S}_{AB}$ vanishes.

The traceless part of (\ref{basic}) reads
\begin{equation}
\label{TSom}
  \omega_{A||B} +\omega_{B||A} - g_{AB}\omega^C{_{||C}}
  = -2\omega_A \omega_B + g_{AB} \omega^C \omega_C
\end{equation}
and for the trace we get
\begin{equation}
\label{divom}
  \omega^A{_{||A}} = K - \omega^A \omega_A \, ,
\end{equation}
where $K:=\frac12{\tilde{\tilde g}}^{AB}R_{AB}$ is
the Gaussian curvature of $B$
and ${\tilde{\tilde g}}^{AB}$ is the two-dimensional inverse metric.

Let us notice that eq. (\ref{divom}) and Gauss-Bonnet
theorem
\begin{equation}\label{GB}
 2-2\mathbf{g}=\frac1{2\pi}\int_B\lambda K
 \end{equation}
exclude immediately the case $\mathbf{g}\geq 2$ because
\[  0 \leq \int_B \lambda\omega_A\omega^A=\int_B\lambda K < 0 \]
which is impossible.
\end{proof}
Moreover, for $\mathbf{g}=1$ equation (\ref{divom}) and (\ref{GB}) imply
that on a torus the vector field $\omega^A$ is vanishing
and we obtain the following result:
\begin{Theorem}
\label{torus}
For any Riemannian metric $g_{AB}$
on a two-dimensional torus  %($\;\mathbf{g}= 1$)
 equation (\ref{basic}) possesses only trivial
solutions $\omega^A\equiv 0 \equiv K$ and the metric $g_{AB}$
is flat.
\end{Theorem}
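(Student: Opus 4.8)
The plan is to integrate the trace equation (\ref{divom}) over the closed manifold $B$ and then invoke the vanishing Euler characteristic of the torus. First I would observe that the trace term $\omega^A{}_{||A}$ is the metric divergence of the vector field $\omega^A$; in density form this is $\lambda\,\omega^A{}_{||A} = \partial_A(\lambda\,\omega^A)$, an exact coordinate divergence. Since $B$ is compact, connected, and without boundary, and since $\omega^A$ is a genuine smooth vector field on $B$ (not merely a local coordinate expression), Stokes' theorem gives $\int_B \lambda\,\omega^A{}_{||A} = 0$.

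Next I would integrate (\ref{divom}) itself. Its left-hand side vanishes by the previous step, so that $\int_B \lambda K = \int_B \lambda\,\omega^A\omega_A$. For genus $\mathbf{g}=1$ the Gauss--Bonnet formula (\ref{GB}) forces $\int_B \lambda K = 0$, and therefore $\int_B \lambda\,\omega^A\omega_A = 0$. The decisive point is a pointwise positivity argument: because $g_{AB}$ is Riemannian, the integrand $\omega^A\omega_A = {\tilde{\tilde g}}^{AB}\omega_A\omega_B$ is nonnegative everywhere and $\lambda>0$, so a nonnegative continuous density whose integral over $B$ vanishes must vanish identically. Hence $\omega_A \equiv 0$ on $B$.

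Finally I would substitute $\omega_A \equiv 0$ back into (\ref{divom}), which immediately yields $K \equiv 0$; one could equally read this off the full basic equation (\ref{basic}), whose right-hand side $R_{AB}$ then vanishes identically. A two-dimensional Riemannian metric with identically vanishing Gaussian curvature is flat, which delivers all three conclusions of the statement. I expect no serious obstacle in this argument; the only step that genuinely requires care is the justification that the divergence integrates to zero, which rests entirely on $B$ being a closed surface and on $\omega^A$ being globally well defined, so that no boundary contribution can appear. Everything else is an elementary combination of integration, Gauss--Bonnet, and positivity.
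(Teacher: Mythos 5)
Your proposal is correct and follows essentially the same route as the paper: the paper also integrates the trace equation (\ref{divom}) over $B$ (the divergence term dropping out on the closed surface), invokes Gauss--Bonnet with $\mathbf{g}=1$ to force $\int_B \lambda\,\omega^A\omega_A = 0$, concludes $\omega^A \equiv 0$ by positivity, and then reads off $K \equiv 0$ and flatness. Your write-up merely makes explicit the steps the paper leaves as a one-line remark before the theorem.
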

The Theorems \ref{main} and \ref{torus}
do not cover the most interesting case $B=S^2$.
We would like to add some more observations which are
valid in general case before we restrict ourselves to the case
when manifold $B$ is a sphere.

Contracting equation (\ref{TSom}) with $\omega^A\omega^B$,
we obtain the following identity:
\begin{equation}
\label{iom}
\omega^B\left(\omega^A \omega_A\right)_{||B}=
 \omega^A \omega_A \omega^B{_{||B}} -\left(\omega^A \omega_A\right)^2\, .
\end{equation}
Using (\ref{divom}) and (\ref{iom}), one can check the following
equality
\begin{equation}\label{3om}
  \left[ \omega^B\left(\omega^A \omega_A\right)^\alpha\right]_{||B} =
  -(2\alpha+1)\left(\omega^A \omega_A\right)^{\alpha+1}  +
  (\alpha+1)\left(\omega^A \omega_A\right)^\alpha K \, ,
\end{equation}
which finally implies one-parameter family of integral identities
\begin{equation}\label{intfg}
  \frac{2\alpha+1}{\alpha+1}\int_{B} \lambda F^{\alpha+1}=
  \int_{B} \lambda K F^\alpha \, ,
\end{equation}
where $F:=\omega^A \omega_A$ and $\lambda:=\sqrt{\det g_{AB}}$.
%We should also assume that $\alpha\geq 0$ to avoid singularities
%in critical points of vector field $\omega^A$.

Suppose $\omega^A$ has only finite set of critical
points which are isolated.
 Then using eq. (\ref{3om}) for $\alpha=-\frac12$ we obtain
\begin{equation}\label{12K}
  \left[ {\omega^B\over
  \sqrt{\omega^A \omega_A}}\right]_{||B} =
  \frac12 \frac{K}{\sqrt{\omega^A \omega_A}} \, .
\end{equation}
Surrounding critical points of $\omega^A$ by small circles and
passing to the limit (i.e. shrinking circles to critical points)
 one can check that
\begin{equation}\label{Kom}
 \int_B \frac{\lambda K}{\sqrt{\omega^A \omega_A}} =0 \, ,
 \end{equation}
which is a special case of (\ref{intfg}) for $\alpha=-\frac12$.

Now, let us restrict ourselves to the case $B=S^2$.
From Gauss-Bonnet theorem we have
\[ \int_{S^2} \lambda K =4\pi > 0 \, .\]
Hence the condition (\ref{Kom}) implies that
$K$ must be negative on some open subset of $S^2$.
The above considerations can be summarized by the following
\begin{Theorem}\label{tw3}
There are no solutions of equation (\ref{basic})
with the following properties:
\begin{itemize}
\item $\omega^A=0$ only at finite set of points,
\item $B$ is a sphere with non-negative Gaussian curvature.
\end{itemize}
\end{Theorem}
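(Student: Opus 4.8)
The plan is to argue by contradiction, turning each of the two hypotheses into one ingredient and colliding them. Suppose a pair $(\omega^A,g_{AB})$ solving the basic equation (\ref{basic}) on $B=S^2$ existed with $\omega^A$ vanishing only on a finite isolated set and with $K\geq 0$ everywhere. The first hypothesis is precisely the assumption under which the integral identity (\ref{Kom}) was derived, so I would invoke
\[ \int_B \frac{\lambda K}{\sqrt{\omega^A\omega_A}} = 0 \, , \]
and set against it the Gauss--Bonnet value $\int_{S^2}\lambda K = 4\pi>0$ supplied by (\ref{GB}).

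The contradiction is then a pure sign argument. Since $\sqrt{\omega^A\omega_A}>0$ off the finite critical set and $K\geq 0$ by the second hypothesis, the integrand $\lambda K/\sqrt{\omega^A\omega_A}$ is non-negative almost everywhere; an integral of a non-negative density that vanishes forces the density to vanish almost everywhere, i.e. $K=0$ almost everywhere on $S^2$. But then $\int_{S^2}\lambda K=0$, contradicting the value $4\pi$. Equivalently, as noted just before the statement, (\ref{Kom}) together with $\int_{S^2}\lambda K>0$ forces $K<0$ on an open subset, which is incompatible with non-negative curvature. Either way the postulated solution cannot exist.

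The step I expect to require genuine care is the status of (\ref{Kom}) itself, since its integrand is singular at the zeros of $\omega$ and should be read as the limit of integrals over $S^2$ with small geodesic disks $D_i(r)$ excised. Integrating the divergence relation (\ref{12K}) over $S^2\setminus\bigcup_i D_i(r)$ and applying the divergence theorem leaves boundary fluxes of $\omega^B/\sqrt{\omega^A\omega_A}$ through the circles $\partial D_i(r)$; because this vector is a $g_{AB}$-unit field, each such flux is bounded by the circumference and is therefore $O(r)$, vanishing as the circles shrink, which is what yields (\ref{Kom}). To keep the final contradiction immune to the order of vanishing of $\omega$ at its zeros, I would run the sign argument on the truncated integrals directly: for $K\geq 0$ each $\int_{S^2\setminus\bigcup_i D_i(r)}\frac{\lambda K}{\sqrt{\omega^A\omega_A}}$ is non-negative and, once $K>0$ on a fixed positive-measure set bounded away from the critical points, bounded below by a positive constant for all small $r$, so its limit cannot be $0$. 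The only local input needed is that $\omega/\sqrt{\omega^A\omega_A}$ remains a well-defined unit field on each small circle, which holds because the zeros are isolated; controlling this behaviour near the critical points is where the weight of the argument lies.
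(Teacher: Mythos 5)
Your proof is correct and follows essentially the same route as the paper: it combines the identity (\ref{Kom}) (obtained by integrating the divergence relation (\ref{12K}) over the sphere with small disks around the zeros of $\omega^A$ excised and letting them shrink, the unit-field flux estimate killing the boundary terms) with the Gauss--Bonnet value $\int_{S^2}\lambda K=4\pi>0$, and concludes by the sign argument that $K$ would have to be negative somewhere. Your extra care with the truncated integrals simply makes rigorous the limiting step the paper dismisses with ``one can check''.
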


When two-dimensional surface $B$ is a
 sphere then the corresponding null hypersurface
  is the non-expanding horizon.
A one-parameter family of non-expanding horizons
is still possible in the case of vacuum Einstein equations
but the Gaussian curvature has to be negative on some
domains or the vector field $\omega^A$ vanishes
on infinite set of points.

Let us observe that
\[  \int_{S^2} \lambda\omega_A\omega^A=\int_{S^2}\lambda K =4\pi  \]
implies that $\omega^A \neq 0$ on an open subset.
On the other hand, we show in the next Section an example of solutions
for Einstein-Maxwell equations
with $\omega^A \equiv 0$, which means that some
 arguments used above
are no longer true for non-vacuum solutions.

\section{Axially symmetric solutions for Einstein-Maxwell equations}
The two-dimensional part of Einstein-Maxwell equations (\ref{EM1})
can be written as follows
 \begin{eqnarray} \nonumber
&& \omega_{A||B} +\omega_{B||A} - g_{AB}\omega^C{_{||C}}
  = -2\omega_A \omega_B + g_{AB} \omega^C \omega_C \, ,  \\
&& f_{A||B} + f_{B||A} - g_{AB} f^C{_{||C}}
  = -2f_A \omega_B -2f_B \omega_A + 2 g_{AB} f^C \omega_C
\label{EM2}   \, ,\\ \nonumber
&& K - \omega^A{_{||A}}  - \omega^A \omega_A =  \kappa_0 f_A f^A \, ,
  \end{eqnarray}
where $f_A:=F_{vA}$ is a covector on $B$ corresponding
to $\Phi_1$. The all objects in (\ref{EM2}),
namely $\omega_A$, $f_A$ and $g_{AB}$, do not depend on $v$.

Some arguments from the proof of Theorem \ref{main}
 can be generalized to the
case with Maxwell field. In particular, the equation (\ref{3om})
takes now the following form:
\begin{equation}\label{3omEM}
  \left[ \omega^B\left(\omega^A \omega_A\right)^\alpha\right]_{||B} =
  -(2\alpha+1)\left(\omega^A \omega_A\right)^{\alpha+1}  +
  (\alpha+1)\left(\omega^A \omega_A\right)^\alpha
  \left(K-\kappa_0 f_A f^A\right) \, ,
\end{equation}
and we get the same integral identity (\ref{intfg}) for functions
$F=\omega^A \omega_A$ and $K-\kappa_0 f_A f^A$
instead of $K$. Moreover, for genus $\mathbf{g}\neq 0$ we can
repeat the arguments and we get
\begin{eqnarray} \nonumber
&& 4\pi(1-\mathbf{g}) = \int_B \lambda K =
\int_B \lambda\left(\omega^A \omega_A + \kappa_0 f_A f^A
\right) \, ,
  \end{eqnarray}
which implies $\mathbf{g}=1$, $\omega_A \equiv 0$,
$f_A \equiv 0$ and finally $K=0$.
This result can be described as follows
\begin{Theorem}\label{twEM}
For any Riemannian metric $g_{AB}$
on a two-dimensional, compact, connected manifold
with no boundary and genus $\;\mathbf{g}\geq 1$
equations (\ref{EM2})  have
no solutions for $\;\mathbf{g}\geq 2$ and for
$\mathbf{g}=1$ they possess only trivial solutions
i.e. $\omega_A \equiv 0$,
$f_A \equiv 0$ and flat metric $g_{AB}$.
\end{Theorem}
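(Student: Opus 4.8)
The plan is to imitate the proof of Theorem~\ref{main}, where the entire topological obstruction came from integrating the trace equation (\ref{divom}) against Gauss--Bonnet. The only structural change here is the extra Maxwell term on the right-hand side of the third equation in (\ref{EM2}); the essential point I would stress is that this term carries a definite (non-negative) sign, so it reinforces rather than competes with the $\omega^A\omega_A$ term. Note that, unlike the sphere considerations behind Theorem~\ref{tw3}, I need only the direct ($\alpha=0$) integration and not the full one-parameter family (\ref{3omEM}) of identities.

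First I would rewrite the trace equation, the last line of (\ref{EM2}), in divergence form
\[
  \omega^A{}_{||A} = K - \omega^A\omega_A - \kappa_0 f_A f^A \, .
\]
Integrating over $B$ with the volume density $\lambda$ and using that the integral of a total covariant divergence vanishes on a compact manifold without boundary, the left-hand side drops out and I obtain
\[
  \int_B \lambda K = \int_B \lambda\left(\omega^A\omega_A + \kappa_0 f_A f^A\right) \, .
\]
Feeding Gauss--Bonnet (\ref{GB}) into the left-hand side then yields the master identity
\[
  4\pi(1-\mathbf{g}) = \int_B \lambda\left(\omega^A\omega_A + \kappa_0 f_A f^A\right) \, ,
\]
exactly as displayed in the paragraph preceding the statement.

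From here the two cases are immediate. For $\mathbf{g}\geq 2$ the left-hand side is strictly negative while the integrand on the right is pointwise non-negative (a sum of Riemannian norm-squares with coupling $\kappa_0>0$), a contradiction, so no solution exists. For $\mathbf{g}=1$ the left-hand side vanishes, which forces the non-negative integrand to vanish identically; hence $\omega^A\omega_A\equiv 0$ and $f_A f^A\equiv 0$, i.e.\ $\omega_A\equiv 0$ and $f_A\equiv 0$. Substituting these back into the divergence equation leaves $K\equiv 0$, so $g_{AB}$ is flat, completing the $\mathbf{g}=1$ claim. I do not anticipate a genuine obstacle here: the one thing to verify carefully is the sign of the Maxwell contribution --- that $\kappa_0 f_A f^A\geq 0$ --- which is precisely what guarantees the Maxwell field cannot rescue a higher-genus solution, the sole conceptual difference from the vacuum Theorem~\ref{main}.
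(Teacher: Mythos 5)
Your proof is correct and follows essentially the same route as the paper: the author likewise integrates the trace equation of (\ref{EM2}) over $B$, combines it with Gauss--Bonnet to obtain $4\pi(1-\mathbf{g})=\int_B\lambda\left(\omega^A\omega_A+\kappa_0 f_Af^A\right)$, and reads off the contradiction for $\mathbf{g}\geq 2$ and the triviality ($\omega_A\equiv 0$, $f_A\equiv 0$, $K\equiv 0$) for $\mathbf{g}=1$. Your observation that only the direct $\alpha=0$ integration is needed, not the full family (\ref{3omEM}), is also accurate.
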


The Theorem \ref{twEM} together
with the assumption of non-triviality of $f_A$
restricts ourselves to $B=S^2$.
Moreover, let us assume that $\omega_A \equiv 0$. Hence,
equations (\ref{EM2}) reduce to the following
system of equations on $S^2$:
 \begin{eqnarray}
&& f_{A||B} + f_{B||A} - g_{AB} f^C{_{||C}}
  = 0
\label{cvf}   \, ,\\ \label{Kff}
&& K =  \kappa_0 f_A f^A
  \end{eqnarray}
The equation (\ref{cvf}) simply means that $f^A$ is a conformal
vector field on $S^2$. Moreover, the metric $g_{AB}$ is always conformally
related to a round unit sphere metric $h_{AB}$ i.e.
\[ g_{AB}=\Omega^2 h_{AB} \, , \quad \Omega > 0 \]
and eq. (\ref{Kff}) reduces to
\begin{equation}\label{cfo}
-\dtwo_h \ln\Omega = \kappa_0 \Omega^4 h_{AB}f^Af^B -1 \, ,
\end{equation}
where $\dtwo_h$ is the Laplace-Beltrami operator on $S^2$
with respect to the metric $h_{AB}$.

The construction of all axially symmetric solutions of
equations (\ref{cvf}) and (\ref{cfo}) can be obtained as follows:
In the coordinate system $(\theta,\phi)$ such that
$h_{AB}\rd x^A \rd x^B=\rd \theta^2 +\sin^2\theta \rd\varphi^2$
the axially symmetric solutions of (\ref{cvf}) belong to the following
two-dimensional family of conformal vector fields:
\[ f^\theta=-a\sin\theta\, , \quad f^\varphi=b \]
and $h_{AB}f^Af^B=(a^2+b^2)\sin^2\theta$. Hence,
if we assume $\partial_\varphi \Omega=0$,
the equation (\ref{cfo}) simplifies to the following form
\begin{equation} \label{cfu}
  {\rd \over\rd x}\left[ (1-x^2) {\rd \ln\Omega\over\rd x}\right]=
  1-d(1-x^2)\Omega^{4} \, ,
\end{equation}
where $d:=\kappa_0 (a^2+b^2)$ is a positive real constant
 and new coordinate $x:=\cos\theta$. A general solution of (\ref{cfu})
\[ \Omega^4 =
{4\beta^2 c(1-x^2)^{\beta-2}\over
d\left[2c(1+x)^\beta+(1-x)^\beta\right]^2} \]
becomes regular and positive for $\beta=2$ and
\[ \Omega^2={4 \over 2c(1+x)^2+(1-x)^2}\sqrt{\frac{c}d} \]
is an admissible conformal factor for any positive constant $c$.

The above result can be extended to the full space-time Einstein-Maxwell
solution in the Kundt's form\footnote{We remind that $\Omega$ corresponds
to $P$ and the vector field $f^A$ to $\Phi_1$.}
 similar to (31.57) in \cite{exact} or rather
to (31.55) with $G^0=\Phi^0_2=\partial_u(\ln P)=\partial_u \Phi_1=W=0$.
In this simple case the
equations (31.56) reduce to $\dtwo H^0 =0$ hence $H^0=\,$const.

\section{Other facts resulting from basic equation}
We start again with equation (\ref{basic}):
\begin{equation}
\label{basic1}
  \omega^A{_{||B}} +\omega_{B}{^{||A}} +2\omega^A \omega_B =
  R^A{_B}=\frac12 R\delta^A{_B} \, ,
\end{equation}
where $\omega^A$ is now a vector, $\omega_B=g_{AB}\omega^B$,
 $||$ denotes covariant derivative
with respect to the metric $g_{AB}$ and $R_{AB}$ is its Ricci
tensor. The above equation appears not only in the context of
Kundt's class, it also arises in the study of vacuum degenerate
isolated horizons \cite{ABL}, \cite{LP}, \cite{LPJJ}. Moreover, any
degenerate Killing horizon also implies this equation \cite{CRT}.
Hence, it is important to explore properties of this equation. We
already know that for axial symmetry and spherical topology there is
a unique solution -- extremal Kerr (see \cite{LPJJ}). Moreover, when
one-form $\omega_B\rd x^B$ is closed (e.g. static degenerate horizon
\cite{CRT}) there are no solutions of (\ref{basic}). However, in
general, the space of solutions is not known.

The traceless part of (\ref{basic1}) reads
\begin{equation}
\label{TSom1}
  \omega_{A||B} +\omega_{B||A} - g_{AB}\omega^C{_{||C}}
  = -2\omega_A \omega_B + g_{AB} \omega^C \omega_C
\end{equation}
and for the trace we get
\begin{equation}
\label{divom1}
  \omega^A{_{||A}} = K_g - \omega^A \omega_A \, ,
\end{equation}
where $K_g:=\frac12{\tilde{\tilde g}}^{AB}R_{AB}$ is the Gaussian
curvature of $g$ and ${\tilde{\tilde g}}^{AB}$ is the
two-dimensional inverse metric.

Let us notice that eq. (\ref{divom1}) enables one to perform
conformal transformation which leads to non-negative curvature.
More precisely,
let us choose $\alpha$ such that
\[ \triangle_g \alpha = \omega^A{_{||A}} \]
then from (\ref{divom1}) we get
\begin{equation}\label{Kl}
  K_g - \triangle_g \alpha =  \omega^A \omega^B g_{AB} \, ,
\end{equation}
Now, we define
\begin{equation}\label{hAB} h_{AB} := \exp(2\alpha)g_{AB}
\end{equation}
hence
\[ \exp(2\alpha)K_h = K_g - \triangle_g \alpha \]
and finally
\[ K_h = \exp(-4\alpha) \omega^A \omega^B h_{AB} \, \]
is non-negative.
Moreover, traceless part (\ref{TSom1}) is invariant
with respect to conformal rescaling (\ref{hAB}) of the metric $g$:
\[ \Gamma^A{_{BC}}(h)=\Gamma^A{_{BC}}(g)+\delta^A{_B}\alpha_{,C}
 + \delta^A{_C}\alpha_{,B} - g_{BC}\alpha_{,D}{\tilde{\tilde g}}^{DA} \]

\[ \omega^A{_{||B}} +\omega_{B}{^{||A}}-\delta^A{_B}\omega^{C}{_{||C}}=
   \nabla_B(h)\omega^A +\nabla^A(h)(h_{BC}\omega^{C})-\delta^A{_B}
   \nabla_C(h)\omega^{C} \]
hence we get
\begin{eqnarray} \label{TSh}
\nabla_B(h)\omega^A +\nabla^A(h)(h_{BC}\omega^{C})-\delta^A{_B}
   \nabla_C(h)\omega^{C} &=& -2\omega^A \omega_B
   + \delta^A{_B}\omega^C \omega_C \\ \nonumber
   &=& \exp(-2\alpha)
   \left(-2\omega^A h_{BC}\omega^C
   + \delta^A{_B}h_{CD}\omega^C \omega^D \right)
   \end{eqnarray}

Contracting equation (\ref{TSh}) with $\omega^A\omega^B$, we obtain
the following identity:
\begin{equation}
\label{hom} \omega^B\nabla_B\left(\|\omega\|^2\right)=
 \|\omega\|^2 \nabla_B\omega^B -
 \exp(-2\alpha)\left(\|\omega\|^2\right)^2\, ,
\end{equation}
where $\|\omega\|:=\sqrt{h_{AB}\omega^A\omega^B}$.

The last equality implies that when $\|\omega\|>0$ there are no
solutions of  equation (\ref{basic1}).
More precisely, we have:
\begin{equation}\label{hom1}
\nabla_B\left(\omega^B\|\omega\|^{-2}\right)=
 \exp(-2\alpha)=\frac{\sqrt{K_h}}{\|\omega\|} > 0 \, ,
\end{equation}
and integrating the above equality we get a contradiction.
This is not surprising because any vector field on a sphere
vanishes at least at one point.

\subsection{Integrability conditions}
Equation (\ref{basic1}) written in the following equivalent form:
\begin{equation}\label{basic2}
\omega_{A||B}= f\varepsilon_{AB}+\frac14 R g_{AB} -\omega_A\omega_B \, ,
\end{equation}
where $f:=\frac12\omega_{A||B}\varepsilon^{AB}$ is an unknown function on a sphere,
implies as follows:
\[ \omega_{A||BC}\varepsilon^{BC} =
   -f_{,A} - 3f\omega_A+\frac14\varepsilon_{AB}\left(
   R^{||B}+R\omega^B \right) \, . \]
Moreover, definition of curvature gives
\[ \omega_{A||BC}\varepsilon^{BC} = R^D{_{ABC}}\omega_D\varepsilon^{BC} \]
where
\[ R^D{_{ABC}}=\frac12 R\left( \delta^D{_B}g_{AC}- \delta^D{_C}g_{AB}\right) \]
hence
\[ R^D{_{ABC}}\varepsilon^{BC} = R \varepsilon^D{_A} \,. \]
Using the above formulae and the identity
\[ f_{||AB}\varepsilon^{AB} =0 \]
we can derive the following integrability condition:
\begin{equation}\label{laplasjanR}
\frac14 R^{||A}{_A} +2(R\omega^A)_{||A}=
 6f^2+ \frac38 R ( R-12\omega_A\omega^A ) \, .
\end{equation}
Equation (\ref{laplasjanR}) implies that there exists
non-empty open subset where $12\omega_A\omega^A > R > 0$.

\subsection{Transformation to linear problem}

Let us denote
\[ \Phi_A: = \frac{\omega_A}{\omega^B\omega_B} \,. \]
For any domain where $\omega^B\omega_B > 0$
equation (\ref{basic2}) implies
\begin{equation}\label{rotPhi} \Phi_{A||C}\varepsilon^{AC}
= \left(\frac{\omega_A}{\omega^B\omega_B}\right)_{||C}
\varepsilon^{AC} = 0 \end{equation}
which simply means that the one-form $\Phi_A\rd x^A$ is closed
and locally there exists coordinate $\Phi$ such that
\[ \rd\Phi= \Phi_A\rd x^A \, . \]
Moreover, from (\ref{basic2}) we get
\begin{equation}\label{divPhi} \Phi^A{_{||A}} = 1 \end{equation}
hence the potential $\Phi$ is a solution of the Poisson's equation:
\begin{equation}\label{lapPhi}
 \triangle\Phi = 1 \, .
\end{equation}
\underline{Remark} If we choose one isolated point where $\omega$
vanishes then for a given metric $g$ we have unique solution of the
above Laplace-Beltrami equation (Green function in the enlarged
sense). For more isolated points we can take linear combination of such
solutions. More precisely, let $G_{x_0}$ be a unique solution (for a given
metric $g$) of the equation (\ref{lapPhi}) on $S^2-\{x_0\}$.
If $c_0+ c_1+ \ldots + c_n = 1$ (where $c_i \in {\mathbb R}$)
then $\Phi= c_0 G_{x_0}+ c_1 G_{x_1}+ \ldots + c_n G_{x_n}$ is
a solution of (\ref{lapPhi}) on $S^2-\{ x_0, x_1, \ldots x_n \}$ and
$\omega$ vanishes at the points $x_0, x_1, \ldots x_n$.

\subsection{Solution of the problem with axial symmetry}
Let us consider axially symmetric two-metric on a sphere
in the following form:
\begin{equation}\label{gax} g=2m^2 \left[ A^{-1}(\theta)\rd\theta^2
+ A(\theta)\sin^2\theta\rd\phi^2 \right] \end{equation}
where $A: [0,\pi] \rightarrow {\mathbb R}$ is a positive smooth function with
boundary values $A(0)=A(\pi)=1$ and positive constant $m^2$ controls the
total volume of a sphere.
Eq. (\ref{gax}) implies that $\lambda=\sqrt{\det g_{AB}}=2m^2\sin\theta$.
From (\ref{divPhi}) we get
\[ \partial_A\left(\lambda\Phi^A\right)=\lambda \]
hence
\[ \lambda\Phi^\theta= -2m^2(\cos\theta + C) \]
 where $C$ is a constant of integration. Moreover, from
(\ref{rotPhi}) we obtain $\partial_\theta\Phi_\phi=0$ and
\[ \Phi_\phi=2m^2\alpha \]
with arbitrary constant $\alpha$. The equation (\ref{divom1}),
in terms of $\Phi^A$, takes the following form:
\begin{equation}\label{PhiK}
\partial_A\left(\frac{\lambda\Phi^A}{\Phi^B\Phi_B}\right)+
\frac{\lambda}{\Phi^B\Phi_B}-\lambda K =0 \, .
\end{equation}
The square of vector $\Phi^A$:
\[ \Phi^A\Phi_A= 2m^2\frac{(\cos\theta + C)^2+\alpha^2}{A\sin^2\theta}\, , \]
Gaussian curvature:
\[ \lambda K= - \frac12 \partial_\theta \left[ \frac1{\sin\theta}
  \partial_\theta \left( A\sin^2\theta\right) \right] \]
and the equation (\ref{PhiK}) imply that the function $A$ obeys
the following linear ODE:
\begin{equation}\label{ODE}
\frac{\rd}{\rd x}\frac{(x+C)y}{(x+C)^2+\alpha^2}+
\frac{y}{(x+C)^2+\alpha^2} +\frac12 \frac{\rd^2 y}{\rd x^2}=0
\end{equation}
where $x:=\cos\theta$ and $y:=A\sin^2\theta$. For $\alpha=0$ we get
\[ \frac{\rd^2}{\rd x^2} \left[ (x+C)y \right] =0 \]
with a general solution $y=\frac{ax+b}{x+C}$. However, in the case
$\alpha=0$
the function $A=\frac{y}{1-x^2}$ can not be regular at both points
$+1$ and $-1$ simultaneously. Nonexistence of regular
solutions for $\alpha=0$ confirms the main result
of \cite{CRT} because $\Phi_\phi=0$ gives $\omega_\phi=0$ which
 obviously implies $\rd \omega=0$.

For $\alpha\neq 0$ we take a new variable $t:=\frac{x+C}{\alpha}$
and the equation (\ref{ODE}) takes the form
\[ \frac{\rd}{\rd t}\left[ \frac{\rd}{\rd t}(ty)-\frac{2y}{1+t^2}
 \right] =0 \]
with the following general solution
\begin{equation}\label{sy} y=\frac{at+b(t^2-1)}{t^2+1} \end{equation}
with arbitrary constants $a,b$.
The solution (\ref{sy}) gives the following form of the function $A$:
\[ A = \frac{y}{1-x^2}=\frac{a\alpha(x+C)+b[(x+C)^2-\alpha^2]}{(1-x^2)
[(x+C)^2+\alpha^2]} \, .\]
Regularity of $A$ at $x=\pm 1$ implies that
$C^2=1-\alpha^2$ (hence $0< |\alpha | \leq 1$) and
$\frac{a}{b}\alpha+2C=0$ which gives
\[ A =\frac{-b}{[(x+C)^2+\alpha^2]} \, .\]
Moreover, $A(\pm 1)=1$ implies $b=-2$, $\alpha=1$, $C=0$ hence
\[ A=\frac2{1+x^2}=\frac2{1+\cos^2\theta} \, ,\]
and finally
\begin{equation}\label{gKerr}
g=2m^2 \left[ \frac{1+\cos^2\theta}2\rd\theta^2
+ \frac{2\sin^2\theta}{1+\cos^2\theta}\rd\phi^2 \right] \end{equation}
and
\begin{equation}\label{omKerr}
 \omega^\theta=-\frac{\sin\theta\cos\theta}{m^2(1+\cos^2\theta)^2} \, ,\quad
   \omega^\varphi=\frac{1}{2m^2(1+\cos^2\theta)} \, ,
\end{equation}
which corresponds to extremal Kerr with mass $m$ and angular momentum $m^2$.

It is worth to notice that the solution (\ref{omKerr}) in terms of $\Phi_A$
has a simple and natural form. More precisely, equations (\ref{rotPhi})
and (\ref{divPhi}) extended through the ``poles'' are the following:

\begin{equation}\label{rotPhiKerr} \Phi_{A||C}\varepsilon^{AC}
%= \left(\frac{\omega_A}{\omega^B\omega_B}\right)_{||C}\varepsilon^{AC}
= 4\pi m^2 \left( {\boldsymbol\delta}_{\theta=\pi} -
{\boldsymbol\delta}_{\theta=0} \right) \, ,
\end{equation}

\begin{equation}\label{divPhiKerr} \Phi^A{_{||A}} = 1 -
4\pi m^2 \left( {\boldsymbol\delta}_{\theta=\pi} +
{\boldsymbol\delta}_{\theta=0} \right) \, ,
\end{equation}
where by $\boldsymbol\delta_p$ we denote a Dirac delta at point $p$
and $8\pi m^2$($=\int\lambda$) is a total volume of the sphere (\ref{gKerr}).

Let $G_p$ be a Green function satisfying
\begin{equation}\label{lapGp}
\left\{ \begin{array}{l}
 \triangle G_p = 1 -8\pi m^2 \boldsymbol\delta_p \, , \\[1ex]
 \int\lambda G_p = 0 \, .
 \end{array} \right.
\end{equation}
The potentials $\Phi$, $\tilde\Phi$ for the covector field $\Phi_A$
defined (up to a constant) as follows
\begin{equation}\label{potPhi}
    \Phi_A =  \partial_A\Phi+\varepsilon_A{^B}\partial_B\tilde\Phi
\end{equation}
take a simple form
\begin{eqnarray}
% \nonumber to remove numbering (before each equation)
  \Phi &=& \frac12 (G_{\theta=0}+G_{\theta=\pi} ) \\
  \tilde\Phi &=& \frac12 (G_{\theta=0}-G_{\theta=\pi} )
\end{eqnarray}
because equations (\ref{rotPhiKerr}), (\ref{divPhiKerr}) and
(\ref{potPhi}) imply
\begin{eqnarray}
% \nonumber to remove numbering (before each equation)
  \triangle \Phi &=& 1 -4\pi m^2 \left( {\boldsymbol\delta}_{\theta=\pi} +
{\boldsymbol\delta}_{\theta=0} \right) \, , \\
  \triangle \tilde\Phi &=& 4\pi m^2 \left( {\boldsymbol\delta}_{\theta=\pi} -
{\boldsymbol\delta}_{\theta=0} \right) \, .
\end{eqnarray}
Moreover, the Green functions for extremal Kerr (\ref{gKerr}) are given
in the explicit form:
\begin{eqnarray}
% \nonumber to remove numbering (before each equation)
  G_{\theta=0} &=& 4m^2\left[ \frac12\sin^2\frac\theta{2}+\frac18\sin^2\theta
  -\log(\sin\frac\theta{2}) +\frac13 \right] \, , \\
  G_{\theta=\pi} &=& 4m^2\left[ \frac12\cos^2\frac\theta{2}+\frac18\sin^2\theta
  -\log(\cos\frac\theta{2}) +\frac13 \right] \, .
\end{eqnarray}
\section{Conclusions}
We have discussed some geometric consequences of the {\it basic equation}
(\ref{basic}) appearing in the context of
Kundt's class metrics and degenerate (extremal) horizons.
We have obtained several important results like
topological rigidity of the horizon (Section 2 and 3),
integrability conditions %for (\ref{basic})
and transformation to
linear problem which simplifies the proof of
the uniqueness of extremal Kerr for axially symmetric horizon
(Section 4). However, the problem of the existence of non-symmetric solutions
to the {\it basic equation} remains opened.
\appendix
\section{Extremal Kerr}
For extremal Kerr %($r=m=a$)
we have
\[
h=\left(1-\frac12\sin^2\theta\right)^2\rd\theta^2 +\sin^2\theta\rd\varphi^2
\, ,\quad
\lambda_h:=\sqrt{\det h_{AB}}=\frac{1+\cos^2\theta}{2}\sin\theta\]
\[  \exp(-2\alpha)=\frac{4m^2}{1+\cos^2\theta}\, ,\quad
K_h= \frac{4\sin^2\theta}{(1+\cos^2\theta)^3} \]
\[ \omega^\theta=-\frac{\sin\theta\cos\theta}{m^2(1+\cos^2\theta)^2} \, ,\quad
   \omega^\varphi=\frac{1}{2m^2(1+\cos^2\theta)} \, , \quad
 \|\omega\|=\frac{\sin\theta}{2m^2\sqrt{1+\cos^2\theta}} \]
Equation (\ref{hom1}) for extremal Kerr takes a simple form:
\[
\lambda_h\omega^\theta\|\omega\|^{-2}= -2m^2\cos\theta \, , \quad
 \lambda_h\exp(-2\alpha)=\lambda_h\frac{\sqrt{K_h}}{\|\omega\|}
 =2m^2\sin\theta \, .\]

\end{document}